\documentclass[12pt]{amsart}
\usepackage{amssymb,latexsym,amsmath,amsthm,enumitem,mathrsfs,geometry}
\usepackage{fullpage}
\usepackage{fancyhdr}
\usepackage{xcolor}
\usepackage{bbm}
\usepackage{stmaryrd}
\usepackage{mathrsfs}
\usepackage{hyperref}
\usepackage{lineno}
\usepackage{diagbox}
\usepackage{array}
\usepackage{tikz}
\usetikzlibrary{arrows}
\usetikzlibrary{positioning}
\usepackage{float}
\usepackage{comment}
\usepackage{mathtools}
\usepackage{cleveref}
\usepackage{graphicx}
\usepackage{subcaption}

\newtheorem{theorem}{Theorem}[section]
\newtheorem*{theorem*}{Theorem}
\newtheorem{lemma}{Lemma}[section]
\newtheorem*{remark*}{Remark}

\begin{document}

\numberwithin{equation}{section}

\title{Lie symmetry classification and invariant solutions of time-fractional telegraph systems with variable coefficients}

\author[Sodoo]{Sodbaatar Adiya}
\address{National Intelligence University and Department of Mathematics, National University of Mongolia, Ulaanbaatar, Mongolia}
\email{sodoomath@gmail.com}
\author[khongoroo]{Khongorzul Dorjgotov}
\address{Department of Mathematics, National University of Mongolia, Ulaanbaatar, Mongolia}
\email{pilpalpil@gmail.com}
\author[magnai]{Bayarmagnai Gombodorj}
\address{Department of Mathematics, National University of Mongolia, Ulaanbaatar, Mongolia}
\email{bayarmagnai@smcs.num.edu.mn}
\author[bayaraa]{Bayarpurev Mongol}
\address{Department of Electronics and Communication Engineering, National University of Mongolia, Ulaanbaatar, Mongolia}
\email{bayarpurev@gmail.com}
\author[uuganaa]{Uuganbayar Zunderiya$^*$}
\address{Department of Mathematics, National University of Mongolia, Ulaanbaatar, Mongolia}
\email{zunderiya@gmail.com}

\keywords{Fractional linear system, Lie symmetry, Optimal system, Invariant solution}
\thanks{$^*$Corresponding Author: U. Zunderiya}


\begin{abstract}
Time-fractional telegraph equations provide fundamental mathematical models for transport processes that exhibit memory and nonlocal effects in industrial and physical systems. These models arise naturally in heat transport in materials with thermal memory, wave propagation in viscoelastic media, and charge transport in spatially heterogeneous semiconductor devices. In this study, we investigate a class of time-fractional telegraph systems with spatially varying coefficients using Lie symmetry analysis and the Riemann--Liouville fractional derivative. We establish a complete Lie group classification for sufficiently differentiable coefficient functions and determine all functional forms that admit such symmetry extensions. The symmetry structure is shown to depend fundamentally on the relationship between the transport coefficient and the potential function, resulting in three distinct symmetry classes. For each case, optimal systems of one-dimensional Lie subalgebras are constructed, and the governing fractional partial differential equations are systematically reduced to fractional ordinary differential equations. Exact invariant solutions are obtained in closed form and expressed in terms of Mittag--Leffler functions, generalized Wright functions, and Fox $H$-functions. These analytical solutions provide valuable insights into fractional telegraph-type transport phenomena and serve as important benchmarks for validating numerical methods in industrial transport modeling and fractional evolution systems.
\end{abstract}

\maketitle

\section{Introduction}

The telegraph equation is a fundamental mathematical model that describes signal propagation and transport phenomena involving finite propagation speed and memory effects. Originally introduced in electrical engineering to model signal transmission in cables, telegraph equations have since found widespread applications in wave propagation, heat transport, and charge transport in complex media. Unlike classical diffusion equations, telegraph equations incorporate both wave-like and diffusive characteristics, making them particularly suitable for modeling transport processes in industrial systems and heterogeneous materials \cite{Kac1974, OrsingherBeghin2004}.

The symmetry and conservation law structures of telegraph equations have been extensively studied using the Lie group method. Bluman, Temuerchaolu, and Sahadevan established local and nonlocal symmetries for nonlinear telegraph equations and demonstrated their fundamental role in constructing exact solutions and conservation laws \cite{BlumanTemuerchaoluSahadevan2005}. Further studies have revealed deep connections between symmetry structures and conservation laws in telegraph-type equations, providing powerful analytical tools for understanding their mathematical properties and physical behavior \cite{BlumanTemuerchaolu2005JMAA, BlumanTemuerchaolu2005JMP}.

In many industrial and physical systems, transport processes exhibit memory and hereditary effects that cannot be adequately described using classical integer-order differential equations. Such phenomena arise in viscoelastic materials, anomalous heat conduction, and charge transport in disordered semiconductors, where the transport dynamics depend on the history of the system. Fractional differential equations provide a natural and effective framework for modeling these processes by incorporating fractional derivatives that capture memory and nonlocal behaviors \cite{Podlubny1999, MillerRoss1993, Mainardi2010, MetzlerKlafter2000}. Fractional transport models have been successfully applied to describe anomalous diffusion in porous media, heat transport in materials with thermal memory, and dispersive charge transport in amorphous semiconductors \cite{ScherMontroll1975, HenryWearne2000}.

Here, we consider the Riemann--Liouville fractional derivative defined by
\begin{align}\label{defrld}
\frac{\partial^\alpha}{\partial t^\alpha} u(x,t)
:=
\begin{cases}
\displaystyle \frac{\partial^n}{\partial t^n} u(x,t),
& \alpha = n \in \mathbb{N}, \\[8pt]
\displaystyle \frac{1}{\Gamma(n-\alpha)}
\frac{\partial^n}{\partial t^n}
\int_0^t \frac{u(x,s)}{(t-s)^{\alpha-n+1}} ds,
& \alpha \in (n-1,n),\; n \in \mathbb{N}.
\end{cases}
\end{align}
This definition generalizes classical integer-order differentiation to non-integer orders and naturally incorporates memory effects through the convolution integral.

Fractional generalizations of telegraph equations have attracted significant attention due to their ability to model transport phenomena in heterogeneous media and complex industrial systems. These equations naturally arise in the modeling of wave propagation in viscoelastic materials, heat conduction with memory effects, and charge transport in semiconductor devices with spatially varying material properties \cite{OrsingherBeghin2004, Mainardi2010}.

Lie symmetry analysis provides a systematic method for studying differential equations by identifying the transformation groups that leave the governing equations invariant. This approach enables symmetry classification, reduction of partial differential equations to ordinary differential equations, and construction of exact invariant solutions \cite{BlumanKumei1989, Olver1993}. The extension of Lie symmetry methods to fractional differential equations was initiated by Gazizov et al. \cite{GazizovKasatkinLukashchuk2007}, and further developments have established symmetry classifications for various classes of fractional evolution and diffusion equations \cite{LukashchukMakunin2015, HuangShen2015, ourNL, Baleanu2022, Wang2023, Gurefe2023}.

In this study, we consider the following system of time-fractional telegraph equations:
\begin{equation}\label{1}
\begin{cases}
\frac{\partial^\alpha u}{\partial t^\alpha} = v_x, \\
\frac{\partial^\alpha v}{\partial t^\alpha} = f(x)u_x + g(x)u,
\end{cases}
\end{equation}
where $\alpha > 0$ is the fractional order, and $f(x)$ and $g(x)$ are sufficiently differentiable functions with $f(x) > 0$. This system represents a fractional generalization of the classical telegraph equations with spatially varying coefficients. The function $f(x)$ characterizes spatially dependent transport properties such as diffusivity or mobility, while $g(x)$ represents external potential or forcing effects arising in inhomogeneous media.

While previous studies have investigated the symmetry properties of fractional evolution equations under restricted conditions, a complete Lie symmetry classification of fractional telegraph systems with general spatially varying coefficients remains an open problem. The primary objective of this study is to establish a complete Lie symmetry classification for system \eqref{1} and determine all coefficient functions that admit symmetry extensions.

Using Lie symmetry analysis and optimal system methods \cite{PateraWinternitz1977}, we systematically classify the symmetry structure of the system and derive exact invariant solutions expressed in terms of Mittag--Leffler functions, generalized Wright functions, and Fox $H$-functions \cite{ourFCAA}. These analytical solutions provide valuable insights into fractional telegraph-type transport processes and serve as important benchmarks for validating the numerical methods used in industrial transport modeling.

The remainder of this paper is organized as follows. In Section 2, we perform Lie symmetry analysis and derive the determining equations. Section 3 presents the construction of the optimal systems and symmetry reductions. Section 4 provides explicit invariant solutions. Finally, Section 5 summarizes the results and discusses their implications for fractional telegraph systems and industrial transport modeling.

\section{Lie symmetry analysis of the class of the
time fractional Telegraph Equations}

We first recall the necessary definitions and formulas required to perform the Lie symmetry analysis of a system of FPDEs. The general form of a system of time FPDEs with two independent variables $x$ and $t$ is as follows:
\begin{equation}\label{geneq}
\begin{cases}
\cfrac{\partial^\alpha u}{\partial t^\alpha} = F_1(x,t,u,u_x,u_{xx},\ldots, v, v_x, v_{xx},\ldots), \\[10pt]
\cfrac{\partial^\alpha v}{\partial t^\alpha} = F_2(x,t,u,u_x,u_{xx},\ldots, v, v_x, v_{xx},\ldots),
\end{cases}
\end{equation}
where the subscripts denote partial derivatives, and $\alpha$ is a positive real number. In the Lie symmetry analysis, the infinitesimal generator of (\ref{geneq}) is given by
\begin{equation*}
X=\xi\frac{\partial}{\partial x}+\tau\frac{\partial}{\partial t}+\mu\frac{\partial}{\partial u}+\phi\frac{\partial}{\partial v},
\end{equation*}
and the corresponding prolonged infinitesimal generator is
\begin{equation}\label{gener}
\tilde{X}=X+\mu^{(\alpha)}\frac{\partial}{\partial u_{t^{\alpha}}}+\mu^{(1)}\frac{\partial}{\partial u_x}+\cdots
+\phi^{(\alpha)}\frac{\partial}{\partial v_{t^{\alpha}}}+\phi^{(1)}\frac{\partial}{\partial v_x}+\cdots,
\end{equation}
where $\tau,$ $\xi,$ $\mu$ and $\phi$ are infinitesimals and $\mu^{(\alpha)},$ $\mu^{(n)},$ $\phi^{(\alpha)}$ and $\phi^{(n)}$ $(n=1,2,\ldots)$ are extended infinitesimals, $u_{t^\alpha}=\cfrac{\partial^\alpha u}{\partial t^\alpha}$, $v_{t^\alpha}=\cfrac{\partial^\alpha v}{\partial t^\alpha}$. Explicitly, $\mu^{(n)}$ and $\phi^{(n)}$ are given by
\begin{align}\label{mun}
&\mu^{(1)}  =  \mathrm{D}_x(\mu)-u_x\mathrm{D}_x(\xi)-u_t\mathrm{D}_x(\tau), \nonumber\\
&\mu^{(2)}  =  \mathrm{D}_x(\mu^{(1)})-u_{xx}\mathrm{D}_x(\xi)-u_{x t}\mathrm{D}_x(\tau),\nonumber\\
&\vdots  \nonumber\\
&\phi^{(1)} = \mathrm{D}_x(\phi)-v_x\mathrm{D}_x(\xi)-v_t\mathrm{D}_x(\tau),\nonumber\\
&\phi^{(2)} = \mathrm{D}_x(\phi^{(1)})-v_{xx}\mathrm{D}_x(\xi)-v_{x t}\mathrm{D}_x(\tau),\nonumber\\
&\vdots
\end{align}
where $\mathrm{D}_x$ is the total derivative operator defined as
\begin{equation*}
\mathrm{D}_x  :=  \frac{\partial}{\partial x}+u_x\frac{\partial}{\partial u}+u_{xx}\frac{\partial}{\partial u_x}+\cdots
+v_x\frac{\partial}{\partial v}+v_{xx}\frac{\partial}{\partial v_x}+\cdots.
\end{equation*}
The $\alpha$th order extended infinitesimals have the following forms \cite{ourNL}:
\begin{eqnarray}\label{mual2}
\mu^{(\alpha)} & = & \frac{\partial^\alpha\mu}{\partial t^\alpha}-u\frac{\partial^\alpha\mu_u}{\partial t^\alpha}-v\frac{\partial^\alpha\mu_v}{\partial t^\alpha}+\left(\mu_u-\alpha \mathrm{D}_t(\tau)\right)\frac{\partial^\alpha u}{\partial t^\alpha}+\mu_v\frac{\partial^\alpha v}{\partial t^\alpha}-\sum_{n=1}^\infty\binom{\alpha}{n}\mathrm{D}_t^n(\xi)\mathrm{D}_t^{\alpha-n}(u_x)\nonumber\\
& & +\sum_{n=1}^\infty\left[\binom{\alpha}{n}\frac{\partial^n\mu_u}{\partial t^n}-\binom{\alpha}{n+1}\mathrm{D}_t^{n+1}(\tau)\right]\mathrm{D}_t^{\alpha-n}(u)+\sum_{n=1}^\infty\binom{\alpha}{n}\frac{\partial^n\mu_v}{\partial t^n}\mathrm{D}_t^{\alpha-n}(v)+\mu_1,
\end{eqnarray}
\begin{eqnarray}\label{phial}
\phi^{(\alpha)} & = & \frac{\partial^\alpha\phi}{\partial t^\alpha}-v\frac{\partial^\alpha\phi_v}{\partial t^\alpha}-u\frac{\partial^\alpha\phi_u}{\partial t^\alpha}+(\phi_v-\alpha \mathrm{D}_t(\tau))\frac{\partial^\alpha v}{\partial t^\alpha}+\phi_u\frac{\partial^\alpha u}{\partial t^\alpha}-\sum_{n=1}^\infty\binom{\alpha}{n}\mathrm{D}_t^n(\xi)\mathrm{D}_t^{\alpha-n}(v_x)\nonumber\\
& & +\sum_{n=1}^\infty\left[\binom{\alpha}{n}\frac{\partial^n\phi_v}{\partial t^n}-\binom{\alpha}{n+1}\mathrm{D}_t^{n+1}(\tau)\right]\mathrm{D}_t^{\alpha-n}(v)+\sum_{n=1}^\infty\binom{\alpha}{n}\frac{\partial^n\phi_u}{\partial t^n}\mathrm{D}_t^{\alpha-n}(u)+\phi_1,
\end{eqnarray}
where
\begin{eqnarray*}
\mu_1 & = & \sum_{n=2}^\infty\sum_{m_1+m_2=2}^n\sum_{\begin{subarray}
~k_1=0,..,m_1\\
k_2=0,..,m_2\\
k_1+k_2\geq 2
\end{subarray}}\sum_{r_1=0}^{k_1}\sum_{r_2=0}^{k_2}\binom{\alpha}{n}\binom{n}{m_1}\binom{n-m_1}{m_2}\binom{k_1}{r_1}\binom{k_2}{r_2} \frac{1}{k_1!k_2!}\frac{t^{n-\alpha}}{\Gamma(n+1-\alpha)}\\
&&\times(-u)^{r_1}(-v)^{r_2}\frac{\partial^{m_1}u^{k_1-r_1}}{\partial t^{m_1}}\frac{\partial^{m_2}v^{k_2-r_2}}{\partial t^{m_2}} \frac{\partial^{n-m_1-m_2+k_1+k_2}\mu}{\partial t^{n-m_1-m_2}\partial u^{k_1}\partial v^{k_2}},\nonumber
\end{eqnarray*}
and
\begin{eqnarray*}
\phi_1 &=& \sum_{n=2}^\infty\sum_{m_1+m_2=2}^n\sum_{\begin{subarray}
~k_1=0,..,m_1\\
k_2=0,..,m_2\\
k_1+k_2\geq 2
\end{subarray}}\sum_{r_1=0}^{k_1}\sum_{r_2=0}^{k_2}\binom{\alpha}{n}\binom{n}{m_1}
 \binom{n-m_1}{m_2}\binom{k_1}{r_1}\binom{k_2}{r_2} \frac{1}{k_1!k_2!}
\frac{t^{n-\alpha}}{\Gamma(n+1-\alpha)}\\
&&\times (-u)^{r_1}(-v)^{r_2}\frac{\partial^{m_1}u^{k_1-r_1}}{\partial t^{m_1}}\frac{\partial^{m_2}v^{k_2-r_2}}{\partial t^{m_2}}
\frac{\partial^{m-m_1-m_2+k_1+k_2}\phi}{\partial t^{n-m_1-m_2}\partial u^{k_1}\partial v^{k_2}}.
\end{eqnarray*}
Here, $\mathrm{D}_t$ is the total derivative operator defined as
\begin{equation*}
\mathrm{D}_t  :=  \frac{\partial}{\partial t}+u_t\frac{\partial}{\partial u}+u_{xt}\frac{\partial}{\partial u_x}+\cdots
+v_t\frac{\partial}{\partial v}+v_{xt}\frac{\partial}{\partial v_x}+\cdots.
\end{equation*}
Because the lower limit of the integral in (\ref{defrld}) is fixed, it should be invariant with respect to point transformations. We thus arrive at the initial condition
\begin{equation}\label{incon}
\tau(x,t,u,v)|_{t=0}=0.
\end{equation}
Note: If $\mu$ or $\phi$ is linear in $u$ and $v,$ then $\mu_1=0$ and $\phi_1=0,$ respectively.

The infinitesimal invariance criterion in the Lie symmetry analysis for the system given in (\ref{geneq}) is
\begin{equation}\label{gende}
\begin{cases}
\left.\tilde{X}\left(\frac{\partial^\alpha u}{\partial t^\alpha}-F_1(x,t,u,u_x,u_{xx},\ldots, v, v_x, v_{xx},\ldots)\right)\right|_{(\ref{geneq})}=0,\\
\left.\tilde{X}\left(\frac{\partial^\alpha v}{\partial t^\alpha}-F_2(x,t,u,u_x,u_{xx},\ldots, v, v_x, v_{xx},\ldots)\right)\right|_{(\ref{geneq})}=0,
\end{cases}
\end{equation}
where $\tilde{X}$ is given by (\ref{gener}), (\ref{mun}), (\ref{mual2}) and (\ref{phial}). We are now ready to investigate the infinitesimal symmetries of the system in (\ref{1}) following the aforementioned Lie symmetry analysis of the time fractional system.
\begin{theorem}\label{thm_symm}
Let $g(x)\neq 0$ and $\omega_{\lambda_1}(x)=\int_\beta^x \frac{dr}{\sqrt{f(r)}} +\lambda_1$, where $\lambda_1\in \mathbb{R}$ and $\beta$ is a suitable real number. Then all of the infinitesimal symmetries of Eq.~(\ref{1}) are given by the following cases:
\begin{enumerate}[label=(\roman*)]
    \item  for all forms of $f(x)$ and $ g(x)$ except the following three cases, it has the following infinitesimal symmetries
    \begin{equation*}
    X_0 = c_1(x,t)\frac{\partial}{\partial u}+c_2(x,t)\frac{\partial}{\partial v}~\mbox{ and }
    X_1 = u\frac{\partial}{\partial u}+v\frac{\partial}{\partial v},
    \end{equation*}
    where $c_1(x,t)$ and $c_2(x,t)$ are solutions of (\ref{1}).
    \item  for any $f(x)$ and $g(x)=\frac{\lambda_2 \sqrt{f(x)}}{\omega_{\lambda_1}(x)}+\frac{f(x)'}{2}$ (here $\lambda_2\in \mathbb{R},~ \lambda_2\ne 0$), it has the infinitesimal symmetries $X_0$, $X_1$ and
    \begin{equation*}
   X_2 = \sqrt{f(x)}\omega_{\lambda_1}(x)\frac{\partial}{\partial x}
+\frac{t}{\alpha}\frac{\partial}{\partial t}-\frac{f(x)'\omega_{\lambda_1}(x)}{2\sqrt{f(x)}}u \frac{\partial}{\partial u}.
    \end{equation*}
    \item for any $f(x)$ and $g(x)=\lambda_2 \sqrt{f(x)}+\frac{f(x)'}{2}, ~\lambda_2$ (here $\lambda_2\in \mathbb{R},~ \lambda_2\ne 0$), it has the infinitesimal symmetries $X_0$, $X_1$ as well as
    \begin{equation*}
    X_3=\sqrt{f(x)}\frac{\partial}{\partial x}-\frac{f(x)'}{2\sqrt{f(x)}}u\frac{\partial}{\partial u}.
    \end{equation*}
    \item for any $f(x)$ and $g(x)=\frac{f(x)'}{2}$ it has the symmetries $X_0$, $X_1$ along with
    \begin{eqnarray*}
     X_4&=&\sqrt{f(x)}\omega_{\lambda_1}(x) \frac{\partial}{\partial x}+\frac{t}{\alpha}\frac{\partial}{\partial t}-\frac{f(x)'}{2\sqrt{f(x)}}\omega_{\lambda_1}(x) u 
     \frac{\partial}{\partial u},\\
     X_5&=&\sqrt{f(x)}\frac{\partial}{\partial x}-\frac{f(x)'}{2\sqrt{f(x)}}u\frac{\partial}{\partial u},\\
    X_6&=&\frac{1}{\sqrt{f(x)}}v\frac{\partial}{\partial u}+\sqrt{f(x)}u\frac{\partial}{\partial v}.
    \end{eqnarray*}
\end{enumerate}
\end{theorem}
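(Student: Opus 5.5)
We apply the invariance criterion (\ref{gende}) to system (\ref{1}), using the extended infinitesimals (\ref{mun}), (\ref{mual2}), (\ref{phial}). We substitute $u_{t^\alpha}=v_x$ and $v_{t^\alpha}=fu_x+gu$, and then split with respect to the independent quantities $u_x,v_x,u_t,v_t,u_{xt},\ldots$ and the nonlocal terms $\mathrm{D}_t^{\alpha-n}(u)$, $\mathrm{D}_t^{\alpha-n}(v)$, $\mathrm{D}_t^{\alpha-n}(u_x)$, $\mathrm{D}_t^{\alpha-n}(v_x)$. The first layer of determining equations is standard. The coefficients of $\mathrm{D}_t^{\alpha-n}(u_x)$ force $\mathrm{D}_t^n(\xi)=0$, so $\xi=\xi(x)$ and $\xi_u=\xi_v=0$. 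The absence of $u_t,v_t$ terms in the spatial prolongations gives $\tau_x=\tau_u=\tau_v=0$. The vanishing of the coefficients of $\mathrm{D}_t^{\alpha-n}(u)$ for $n\ge1$, together with $\binom{\alpha}{n+1}\tau^{(n+1)}$, gives $\tau_{tt}=0$. The initial condition (\ref{incon}) then yields $\tau=c\,t$. Nonlinear dependence of $\mu,\phi$ on $u,v$ is excluded by the terms quadratic in $u_x,v_x$, so $\mu=a_{11}u+a_{12}v+c_1(x,t)$ and $\phi=a_{21}u+a_{22}v+c_2(x,t)$, which makes $\mu_1=\phi_1=0$. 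The sums $\sum\binom{\alpha}{n}\partial_t^n\mu_u\,\mathrm{D}_t^{\alpha-n}u$ then force the $a_{ij}$ to be independent of $t$.

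Next we collect the remaining local terms. The first equation gives, from the coefficients of $v_x$, $u_x$, $u$ and $v$:
\begin{align*}
a_{22}-a_{11}+\alpha c-\xi'&=0, & a_{21}-f\,a_{12}&=0, & a_{21,x}=0?\ \text{(from $u$-coefficient)}, &&
\end{align*}
and similarly for the second equation. Rather than guess every term, I will write the system for $\xi(x)$, $c$, and $a_{ij}(x)$ and reduce it to: $a_{12}=k/\sqrt{f}$ and $a_{21}=k\sqrt f$, which is the source of $X_6$; $\xi'-\xi f'/(2f)=\alpha c$, whose general solution is $\xi=\sqrt f\,(\alpha c\,\omega_{\lambda_1}(x)+\text{const})$, giving the pieces $\sqrt f\,\omega\,\partial_x$ and $\sqrt f\,\partial_x$; $a_{11}-a_{22}=-\xi f'/(2f)$ up to the constant part generating $X_1$; and a single compatibility condition involving $g$. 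The inhomogeneous parts $c_1,c_2$ just solve (\ref{1}), which gives $X_0$. I also expect a condition of the form $k\,(g-f'/2)=0$ from the $u$-coefficient, and a condition $\xi g' + (\text{stuff})g = \ldots$ linking $\xi$ to $g$.

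The classifying step is the last one. Set $h=(g-f'/2)/\sqrt f$ and change variables to $y=\omega(x)$, so that $\sqrt f\,\partial_x=\partial_y$. The compatibility condition should then reduce to an ODE of the form $(\alpha c\,y+d)\,h_y+\alpha c\,h=0$ for the parameters $(c,d)$. If $h\equiv0$, i.e. $g=f'/2$, every $(c,d)$ is allowed and $k$ is also free, giving case (iv). If $h\ne0$, then $k=0$, and the only extensions occur when the two-dimensional parameter space contains a nontrivial admissible direction. The case $c\ne0$ requires $h=\lambda_2/(y+\lambda_1)$, which is case (ii). The case $c=0$, $d\ne0$ requires $h$ constant, which is case (iii). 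Both cannot hold simultaneously when $\lambda_2\ne0$. Generic $g$ gives case (i).

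I expect the main obstacle to be deriving this compatibility equation cleanly. It mixes the $u$-coefficient in the second equation with the $x$-derivatives of the $a_{ij}$ coming from $\mu^{(1)}$, and the bookkeeping of $f'$ and $f''$ terms must collapse exactly into the combination $g-f'/2$ divided by $\sqrt f$. I would verify the reduction by checking directly that $X_2$, $X_3$ and $X_6$ satisfy (\ref{gende}) in their respective cases.
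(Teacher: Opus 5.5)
There is a genuine gap in your justification of linearity. Terms quadratic in $u_x,v_x$ come only from $-u_x\mathrm{D}_x(\xi)$ and $-v_x\mathrm{D}_x(\xi)$ in $\mu^{(1)},\phi^{(1)}$, so they give $\xi_u=\xi_v=0$ and nothing more. After substituting $\partial_t^\alpha u=v_x$, $\partial_t^\alpha v=fu_x+gu$, every other occurrence of $u_x,v_x$ is linear, with coefficients involving only first derivatives of $\mu,\phi$. So no second derivative such as $\mu_{uu}$ is ever isolated by that splitting.

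The same terms occur for $\alpha=1$. There, case (iv) becomes $U_t=v_y$, $v_t=U_y$ under $U=\sqrt f\,u$, $y=\omega_0(x)$, and it admits the nonlinear symmetries $\mu=F/(2\sqrt f)$, $\phi=F/2$ with $F=F(\omega_0(x)+t,\sqrt f\,u+v)$ arbitrary. So linearity is a genuinely fractional effect, and the natural place to get it is $\mu_1,\phi_1$. In the first equation the products $u_t^2$, $u_tv_t$, $v_t^2$ occur only in $\mu_1$. Since \eqref{neg} and \eqref{hoyor} with $\tau=ct$ make $\mu_u,\mu_v$ independent of $t$, their coefficients reduce to $\binom{\alpha}{2}\frac{t^{2-\alpha}}{\Gamma(3-\alpha)}$ times $\mu_{uu}$, $2\mu_{uv}$, $\mu_{vv}$. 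These must therefore vanish, and likewise for $\phi$. So $\mu_1=\phi_1=0$ is the source of linearity, not a consequence of it as in your write-up.

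The paper proves linearity differently (Lemma~\ref{lemma:linear_inf}). It differentiates \eqref{duruv} and \eqref{tav} to get $\mu_{uu}=f\mu_{vv}$ and $\phi_{uu}=f\phi_{vv}$, then combines second $u,v$-derivatives of \eqref{zurgaa} and \eqref{arav}, using $g\neq0$. If you adopt that route, note that differentiating $\phi_{uu}=f\phi_{vv}$ in $x$ leaves a term $f'\phi_{vv}$. The combination therefore actually yields $(2g-f')\mu_{uv}=0$ and $(2g-f')\phi_{uv}=0$, which is vacuous precisely in case (iv). By my reading, the $\mu_1,\phi_1$ argument avoids that restriction.

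Apart from this, your skeleton is the paper's, and your classification reproduces its Cases 1.1, 1.2 and 2. Your ODE $(\alpha c\,y+d)h_y+\alpha c\,h=0$ is exactly \eqref{u6} in the variable $y=\omega_0(x)$, with the paper's $k=(\sqrt f)'-g/\sqrt f=-h$, $s_1=\alpha c$, $s_2=d$. Splitting on $c$ rather than on $k'$ is slightly cleaner.

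However, you only anticipate this condition, and it is the heart of the proof, so you need to derive it:
\begin{itemize}
\item For $\mu=a_{11}u+a_{12}v+c_1$, $\phi=a_{21}u+a_{22}v+c_2$, the combination $\partial_t^\alpha\mu-u\partial_t^\alpha\mu_u-v\partial_t^\alpha\mu_v$ is just $\partial_t^\alpha c_1$, and likewise for $\phi$.
\item Splitting \eqref{zurgaa} and \eqref{arav} in $u,v$ gives $a_{22,x}=0$ (not $a_{21,x}=0$), $ga_{12}=(fa_{12})_x$ and $fa_{12,x}+ga_{12}=0$. The sum of the last two is $(f'-2g)a_{12}=0$, your $k(g-f'/2)=0$.
\item The $u$-coefficient of \eqref{arav}, after inserting $a_{22}-a_{11}=\tfrac{f'}{2f}\xi$, is $g\bigl(\tfrac{f'}{2f}\xi-\alpha c\bigr)-g'\xi+f\bigl(\tfrac{f'}{2f}\xi\bigr)'=0$.
\item With $\xi=\sqrt f\,w$ and $w'=\alpha c/\sqrt f$, this collapses to $\sqrt f\bigl(\sqrt f\,w\,k'+\alpha c\,k\bigr)=0$, which is \eqref{u6}.
\end{itemize}

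Finally, $\tau_{tt}=0$ does not follow from \eqref{neg} alone. The paper adds \eqref{duruv} and \eqref{es} to get $\xi'-\tfrac{f'}{2f}\xi=\alpha\tau_t$ and separates variables; you already use this equation.
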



\begin{proof}
To find the Lie symmetries, first we need to find a solution to the following determining system
\begin{equation*}
\begin{cases}
\tilde{X}\left(\frac{\partial^\alpha u}{\partial t^\alpha}-v_x\right)|_{(\ref{1})}=0,\\
\tilde{X}(\frac{\partial^\alpha v}{\partial t^\alpha}-f(x)u_x-g(x)u)|_{(\ref{1})}=0.
\end{cases}
\end{equation*}
In explicit form, this is
\begin{equation}\label{10}
\begin{cases}
\left.\left(\mu^{(\alpha)}-\phi^{(1)}\right)\right|_{(\ref{1})}=0,\\
\left.\left(\phi^{(\alpha)}- f(x)' \xi u_x-f(x)\mu^{(1)}-g(x)'u\xi-g(x)\mu\right)\right|_{(\ref{1})}=0.
\end{cases}
\end{equation}
From (\ref{10}), we obtain the following (overdetermined) system of determining equations by setting the coefficients of the linearly independent partial derivatives $D_t^{\alpha-n}u,$ $D_t^{\alpha-n}v,$ $D_t^{\alpha-n}u_x,$ $D_t^{\alpha-n}v_x,$ $v_x,$ $u_x,$ $v_t,$ $u_x v_t,$ $v_x v_t,$ $u_x v_x$, $v_x^2$ and $1$ equal to zero:
\begin{eqnarray}
& & \tau_x=\tau_u=\tau_v=\xi_u=\xi_v=0,\label{aneg}\\
& & D_t^n(\xi)=0,\qquad n=1,2,\ldots,\label{gurav}\\
& & \mu_u-\alpha D_t(\tau)-\phi_v+\xi_x=0,\label{duruv}\\
& & f(x)\phi_v-\alpha f(x) D_t(\tau)-f(x)'\xi-f(x)\mu_u+f(x)\xi_x=0,\label{es}\\
& & \binom{\alpha}{n}\frac{\partial^n\mu_u}{\partial t^n}-\binom{\alpha}{n+1}D_t^{n+1}(\tau)=0,\qquad n=1,2,\ldots,\label{neg}\\
& & \frac{\partial^n\mu_v}{\partial t^n}=0,\qquad n=1,2,\ldots,\label{hoyor}\\
& & f(x)\mu_v-\phi_u=0,\label{tav}\\
& & \frac{\partial^n\phi_u}{\partial t^n}=0,\qquad n=1,2,\ldots,\label{doloo}\\
& & \binom{\alpha}{n}\frac{\partial^n\phi_v}{\partial t^n}-\binom{\alpha}{n+1}D_t^{n+1}(\tau)=0,\qquad n=1,2,\ldots,\label{naim}\\
& & \frac{\partial^\alpha\mu}{\partial t^\alpha}-u\frac{\partial^\alpha\mu_u}{\partial t^\alpha}-v\frac{\partial^\alpha\mu_v}{\partial t^\alpha}+g(x)u\mu_v-\phi_x=0,\label{zurgaa}
\end{eqnarray}
and
\begin{equation}\label{arav}
\frac{\partial^\alpha\phi}{\partial t^\alpha}-v\frac{\partial^\alpha\phi_v}{\partial t^\alpha}-u\frac{\partial^\alpha\phi_u}{\partial t^\alpha}+g(x)u\phi_v-\alpha g(x)uD_t(\tau)-g(x)'u\xi-g(x)\mu-f(x)\mu_x=0.
\end{equation}

Analyzing (\ref{aneg}), (\ref{gurav}), (\ref{duruv}) and (\ref{es}) with the initial condition $\tau(x,t,u,v)\vert_{t=0}=0$, we deduce the following infinitesimal symmetries:
\begin{eqnarray}
\tau(t)&=&\frac{s_1}{\alpha}t,\label{eq:tau1}\\
\xi(x)&=&\sqrt{f(x)}\left( s_1\int^x_{\beta} \frac{1}{\sqrt{f(r)}}dr+s_2 \right)\label{xi}
\end{eqnarray} 
for some $s_1$ and $s_2$.

The following lemma establishes that the infinitesimals $\mu(x, t, u, v)$ and $\phi(x, t, u, v)$ are necessarily linear with respect to the dependent variables $u$ and $v$. This result is crucial as it ensures that the higher-order supplementary terms $\mu_1$ and $\phi_1$ in the infinitesimal prolongations \eqref{mual2} and \eqref{phial} vanish identically, thereby significantly simplifying the system of determining equations.

\begin{lemma} \label{lemma:linear_inf}
 $\mu(x,t,u,v)$ and $\phi(x,t,u,v)$ are linear functions with respect to $u$ and $v$. Thus, $\mu_1=0$ and $\phi_1=0$.
\end{lemma}

\begin{proof}
From \eqref{duruv} and \eqref{tav} by differentiating them with respect to $u$ and $v$ we obtain
\begin{align}
\mu_{uu} - \phi_{vu} &= 0, \label{11u} \\
\mu_{uv} - \phi_{vv} &= 0, \label{11v} \\
f(x)\mu_{vu} - \phi_{uu} &= 0,\label{15u}\\
f(x)\mu_{vv} - \phi_{uv} &= 0 \label{15v}.
\end{align}
By subtracting \eqref{15v} from \eqref{11u}, and by subtracting \eqref{15u} from the product of $f(x)$ and \eqref{11v},  we arrive at the following relations
\begin{align}
\mu_{uu} -  f(x) \mu_{vv} &= 0, \label{11u15v} \\
\phi_{uu} - f(x)\phi_{vv} &= 0 \label{f11v15u}. 
\end{align}
By taking second-order derivative from \eqref{zurgaa} with respect $u$ and $v$ we obtain
\begin{align}
-\frac{\partial^\alpha\mu_{uu}}{\partial t^\alpha}-u\frac{\partial^\alpha\mu_{uuu}}{\partial t^\alpha}-v\frac{\partial^\alpha\mu_{uuv}}{\partial t^\alpha}+2g(x)\mu_{uv}+g(x)u\mu_{uuv}-\phi_{xuu}=0 \label{19uu}, \\
-u\frac{\partial^\alpha\mu_{uvv}}{\partial t^\alpha}-\frac{\partial^\alpha\mu_{vv}}{\partial t^\alpha}-v\frac{\partial^\alpha\mu_{vvv}}{\partial t^\alpha}+g(x)u\mu_{vvv}-\phi_{xvv}=0
\label{19vv} 
\end{align}
correspondingly. 

Next, by subtracting the product of $f(x)$ and \eqref{19vv} from \eqref{19uu}, and utilizing the identities \eqref{11u15v} and \eqref{f11v15u}, we obtain
\begin{equation}
2g(x)\mu_{uv} = 0.
\end{equation}
Given that $g(x) \neq 0$, it follows that $\mu_{uv} = 0$. Applying this result, equation \eqref{11v} reduces to $\phi_{vv} = 0$ and \eqref{15u} reduces to $\phi_{uu} = 0$. Consequently, $\phi(x, t, u, v)$ is established as a linear function with respect to the dependent variables $u$ and $v$.

In an analogous manner, we consider the second-order derivatives of \eqref{arav} with respect to the dependent variables. By subtracting the product of $f(x)$ and the second-order derivative of \eqref{arav} with respect to $v$ from its second-order derivative with respect to $u$, we obtain
\begin{equation}
2g(x)\phi_{uv} = 0.
\end{equation}
Since $g(x) \neq 0$, it follows that $\phi_{uv} = 0$. Utilizing the previously established relations in \eqref{11v} and \eqref{15u}, we find that $\mu_{vv} = 0$. This confirms that $\mu(x, t, u, v)$ is also a linear function with respect to $u$ and $v$.

\end{proof}

Using Lemma \ref{lemma:linear_inf}, (\ref{duruv}), (\ref{es}), (\ref{neg}), (\ref{hoyor}), (\ref{tav}), (\ref{doloo}) and (\ref{naim}), we have the following form of $\mu$ and $\phi$
\begin{align}
\mu(x,t,u,v)&=\left[a_1(x)-\frac{f(x)'}{2\sqrt{f(x)}}\left( s_1\int^x_{\beta} \frac{1}{\sqrt{f(r)}}dr+s_2 \right)\right]u+a_2(x)v+c_1(x,t),\label{eq:mu1}\\
\phi(x,t,u,v)&=a_2(x)f(x)u+a_1(x)v+c_2(x,t)\label{eq:phi1}
\end{align}
for some functions $a_1(x),$ $a_2(x)$, $c_1(x,t)$ and $c_2(x,t)$.

Substituting (\ref{eq:tau1}), (\ref{eq:mu1}) and (\ref{eq:phi1}) into (\ref{zurgaa}) and  (\ref{arav}), we obtain the following system by setting the coefficients of $1$, $u$ and $v$ of (\ref{zurgaa}) and  (\ref{arav})  equal to zero:
\begin{align}
&\partial^\alpha_t c_1(x,t)-c_2(x,t)_x=0,\label{u1}\\
&\partial^\alpha_tc_2(x,t)-f(x)c_1(x,t)_x-g(x)c_1(x,t)=0,\label{u2}\\
&a_1(x)_x=0,\label{u3}\\
&f(x)a_2(x)_x+g(x)a_2(x)=0,\label{u4}\\
&g(x)a_2(x)-\left(f(x)a_2(x)\right)_x=0,\label{u5}\\
&\left(k(x)'\int^x_{\beta} \frac{1}{\sqrt{f(r)}}dr+\frac{k(x)}{\sqrt{f(x)}}\right)s_1+k(x)'s_2-a_1(x)_x=0,\label{u6}
\end{align}
where $k(x)=\left(\sqrt{f(x)}\right)'-\frac{g(x)}{\sqrt{f(x)}}.$ From (\ref{u1}) and (\ref{u2}), the functions $c_1(x,t)$ and $c_2(x,t)$ are a solution of (\ref{1}), and from (\ref{u3}) we get  $a_1(x)=s_3,$ $s_3\in \mathbb{R}$. Therefore, the system in (\ref{1}) admits the following infinitesimal symmetries for any $f(x)$ and $g(x)$:
$$
X_0=c_1(x,t)\frac{\partial}{\partial u}+c_2(x,t)\frac{\partial}{\partial v}~\text{ and }~X_1=u\frac{\partial}{\partial u}+v\frac{\partial}{\partial v}.
$$
Obviously, the infinitesimal symmetry $X_0$ implies nothing else but the linearity of the original system. 

By adding (\ref{u4}) and (\ref{u5}), we get $(f(x)'-2g(x))a_2(x)=0$. Hence, we consider two cases characterized by the relations $f(x)'-2g(x)\neq 0$ and $f(x)'-2g(x)=0$.

\subsection*{Case 1}
Let us consider the case $f(x)'-2g(x)\neq 0$. Then $a_2(x)=0$, $k(x)\neq0$ and (\ref{u6}) become 
\begin{equation}\label{eq:aravus1s2}
\left(k(x)'\int^x_{\beta} \frac{1}{\sqrt{f(r)}}dr+\frac{k(x)}{\sqrt{f(x)}}\right)s_1
=-k(x)'s_2,
\end{equation}
where $k(x)=\left(\sqrt{f(x)}\right)'-\frac{g(x)}{\sqrt{f(x)}}.$

In view of the above relation, it is possible that the system in Eq.(\ref{1}) admits a symmetry other than $X_0$ and $X_1$ if and only if $s_1\neq 0$ or $s_2\neq 0$. Hence, the following two subcases arise depending on $k(x)'\neq 0$ and $k(x)'=0$.

\subsubsection*{Case 1.1}
Let us assume $k(x)'\neq 0.$ Then, there exists $\lambda_1\in\mathbb{R}$ such that $s_2=s_1\lambda_1$ and we see from (\ref{eq:aravus1s2}) that the functions $f(x)$ and $g(x)$ should satisfy the following relation 
$$g(x)=\frac{\lambda_2 \sqrt{f(x)}}{\omega_{\lambda_1}(x)}+\frac{f(x)'}{2},~\omega_{\lambda_1}(x)=\int_\beta^x \frac{dr}{\sqrt{f(r)}} +\lambda_1,~\lambda_2\in \mathbb{R},~ \lambda_2\ne 0.$$ 
Hence, for any $f(x)$ and $g(x)=\frac{\lambda_2 \sqrt{f(x)}}{\omega_{\lambda_1}(x)}+\frac{f(x)'}{2}$ ( here $\omega_{\lambda_1}(x)=\int_\beta^x \frac{dr}{\sqrt{f(r)}} +\lambda_1$, $\lambda_2\in \mathbb{R}$, $\lambda_2\ne 0$),  the system (\ref{1}) has the following additional symmetry
\begin{equation*}
X_2 = \sqrt{f(x)}\omega_{\lambda_1}(x)\frac{\partial}{\partial x}
+\frac{t}{\alpha}\frac{\partial}{\partial t}-\frac{f(x)'}{2\sqrt{f(x)}}\omega_{\lambda_1}(x)u \frac{\partial}{\partial u}.
\end{equation*}

\subsubsection*{Case 1.2}
Suppose that $k(x)'=0$, it immediately follows that
$$g(x)=\lambda_2 \sqrt{f(x)}+\frac{f(x)'}{2}, ~\lambda_2\in \mathbb{R},~ \lambda_2\ne 0.$$
So, for any $f(x)$ and $g(x)=\lambda_2 \sqrt{f(x)}+\frac{f(x)'}{2}$ (here $\lambda_2\in \mathbb{R},~ \lambda_2\ne 0$), the system (\ref{1}) has the following additional symmetry
\begin{equation*}
X_3=\sqrt{f(x)}\frac{\partial}{\partial x}-\frac{f(x)'}{2\sqrt{f(x)}}u\frac{\partial}{\partial u}.
\end{equation*}

\subsection*{Case 2}
Next, let us consider the case $f(x)'-2g(x)=0$. Since $k(x)=0$, in this case, we can see that $f$ and $g$ satisfy (\ref{u6}). Also,  by (\ref{u4}) and (\ref{u5}), we get $a_2(x)=\frac{s_4}{\sqrt{f(x)}},$ $s_4\in \mathbb{R}$. Hence, for any $f(x)$ and $g(x)=\frac{f(x)'}{2}$, the system (\ref{1}) has the following three additional symmetries
\begin{eqnarray*}
X_4&=&\sqrt{f(x)}\omega_{\lambda_1}(x)\frac{\partial}{\partial x}+\frac{t}{\alpha}\frac{\partial}{\partial t}-\frac{f(x)'}{2\sqrt{f(x)}}\omega_{\lambda_1}(x) u 
\frac{\partial}{\partial u},\\
X_5&=&\sqrt{f(x)}\frac{\partial}{\partial x}-\frac{f(x)'}{2\sqrt{f(x)}}u\frac{\partial}{\partial u},\\    X_6&=&\frac{1}{\sqrt{f(x)}}v\frac{\partial}{\partial u}+\sqrt{f(x)}u\frac{\partial}{\partial v}.
\end{eqnarray*}
\end{proof}

We are ready to explore the one-dimensional optimal systems of Lie algebras of infinitesimal symmetries for individual cases and the classification of group invariant solutions, given that the system in Eq.(\ref{1}) has a complete group classification.

In the subsequent calculations, we ignored the trivial infinitesimal symmetry $X_0.$ In the Case i., the Lie algebra is generated by only $X_1,$ which does not yield any invariant solutions. Therefore, we will consider only Cases ii. through iv..

\section{Optimal systems of subalgebras of Lie algebras of the infinitesimal symmetries}\label{sec:RS}

The problem of enumeration of all subalgebras of a given finite-dimensional Lie algebra is vital for the group analysis of differential equations. To do so, a group of automorphisms is calculated using the algorithm presented in \cite{PateraWinternitz1977}. Every subalgebra transforms into a subalgebra of equal dimension under an automorphism. Hence, all subalgebras of the given Lie algebra are decomposed into classes of similar subalgebras. The union of the representatives from every class of subalgebras is called the optimal system of the Lie algebra. Therefore, the classification problem is reduced to the problem of constructing optimal systems of the Lie algebra of the Lie group. The optimal systems of Lie algebras with dimensions less than 6 were calculated in \cite{PateraWinternitz1977}.
 
In this section, we classify all the distinctly different invariant solutions by constructing optimal systems of one-dimensional Lie subalgebras of the Lie algebra. The search for this type of solutions reduces the number of variables. Thus, we derived reduced fractional ordinary differential equations. 

\textbf{Case 1. Reduced equations of (\ref{1}) with any $f(x)$ and $g(x)=\frac{\lambda_2\sqrt{f(x)}}{\omega_{\lambda_1}(x)}+\frac{f(x)'}{2}$ (here $\omega_{\lambda_1}(x)=\int_\beta^x \frac{dr}{\sqrt{f(r)}} +\lambda_1$, $\lambda_1,\lambda_2\in \mathbb{R},~ \lambda_2\ne 0$)} 

Let us consider the following telegraph system
\begin{equation}
\label{eq:TEcaseA2}
\begin{cases}
\frac{\partial^\alpha u}{\partial t^\alpha}=v_x,\\
\frac{\partial^\alpha v}{\partial t^\alpha}=f(x) u_x+
\left (\frac{\lambda_2\sqrt{f(x)}}{\omega_{\lambda_1}(x)}+\frac{f(x)'}{2}\right)u,
\end{cases}
\end{equation}
where $f(x)$ is a sufficiently differentiable function, $\omega_{\lambda_1}(x)=\int_\beta^x \frac{dr}{\sqrt{f(r)}} +\lambda_1$ and $\lambda_1,\lambda_2\in \mathbb{R},~ \lambda_2\ne 0$. The corresponding infinitesimal symmetries are obtained
\begin{equation*}
 X_1=u\frac{\partial}{\partial u}+v\frac{\partial}{\partial v},
 \end{equation*}
 \begin{equation*}
 X_2=\sqrt{f(x)}\omega_{\lambda_1}(x)\frac{\partial}{\partial x}
    +\frac{t}{\alpha}\frac{\partial}{\partial t}-\frac{f(x)'}{2\sqrt{f(x)}}\omega_{\lambda_1}(x)u \frac{\partial}{\partial u}.
\end{equation*}
The Lie algebra generated by $X_1$ and $X_2$ is Abelian, and thus the optimal system consists of
$$W_1=X_1,\quad W_2=X_2+aX_1,\quad\text{where }a\in \mathbb{R}.$$
For $W_2$, the characteristic equation reads as
\begin{eqnarray}
\frac{dx}{\sqrt{f(x)}\omega_{\lambda_1}(x)}=
\frac{\alpha dt}{t}=
\frac{du}{\left(a-\frac{f(x)'}{2\sqrt{f(x)}}\omega_{\lambda_1}(x)\right)u}
=\frac{dv}{av},
\end{eqnarray}
which gives the similarity variable
$
z=\omega_{\lambda_1}(x)^{-\frac{1}{\alpha}}t.
$ Thus, the similarity transformation is
\begin{eqnarray}
\label{eq:ST_thm_ii}
\begin{cases}
u(x,t)=\frac{1}{\sqrt{f(x)}}\omega_{\lambda_1}(x)^{a}\varphi(z),\\
v(x,t)=\omega_{\lambda_1}(x)^a\psi(z).
\end{cases}
\end{eqnarray}
Substituting (\ref{eq:ST_thm_ii}) into (\ref{eq:TEcaseA2}), we get the following reduced system
\begin{eqnarray}
\label{eq:RS_thm_ii}
\begin{cases}
\frac{d^\alpha}{dz^\alpha}\varphi(z)=a\psi(z)-\frac{1}{\alpha}z\psi(z)',\\
\frac{d^\alpha}{dz^\alpha}\psi(z)=(a+\lambda_2)\varphi(z)-\frac{1}{\alpha}z\varphi(z)'.
\end{cases}
\end{eqnarray}

The infinitesimal symmetry $W_1$ appears in each optimal system in the following subsections, and it does not yield any invariant solutions.

\textbf{Case 2. Reduced equations of (\ref{1}) with any $f(x)$ and $g(x)=\lambda_2 \sqrt{f(x)}+\frac{f(x)'}{2}$ (here $\lambda_2\in \mathbb{R},~ \lambda_2\ne 0$)} 

Let us consider the following telegraph equation
\begin{eqnarray}
\label{eq:TEcaseA3}
\begin{cases}
\frac{\partial^\alpha u}{\partial t^\alpha}=v_x,\\
\frac{\partial^\alpha v}{\partial t^\alpha}=f(x) u_x+
\left (\lambda_2\sqrt{f(x)}+\frac{f(x)'}{2}\right)u,
\end{cases}
\end{eqnarray}
where $f(x)$ is a  sufficiently differentiable function, and $\lambda_2\in \mathbb{R},~ \lambda_2\ne 0$. The corresponding infinitesimal symmetries are 
$$X_1=u\frac{\partial}{\partial u}+v\frac{\partial}{\partial v},\quad   X_3=\sqrt{f(x)}\frac{\partial}{\partial x}-\frac{f(x)'}{2\sqrt{f(x)}}u\frac{\partial}{\partial u}.$$
The commutator of the infinitesimal symmetries is zero, i.e., $[X_1,X_3] = 0$, as in the previous case, the optimal system is
$$W_1=X_1,\quad
W_3=X_3+aX_1,\quad\text{where } a\in \mathbb{R}.
$$
Using the characteristic method with $W_3$, we have the following invariant solutions
\begin{eqnarray}
\label{eq:ST_thm_iii}
\begin{cases}
u(x,t)=\frac{1}{\sqrt{f(x)}}\exp\left(a\omega_{0}(x)\right)\varphi(t),\\
v(x,t)=\exp\left(a\omega_{0}(x)\right)\psi(t),
\end{cases}
\end{eqnarray}
then the reduced system is obtained as
\begin{equation}
\label{eq:RS_thm_iii}
    \begin{cases}
    \frac{d^\alpha }{dt^\alpha}\varphi(t)=a\psi(t),\\
    \frac{d^\alpha }{dt^\alpha}\psi(t)=(a+\lambda_2)\varphi(t),
    \end{cases}
\end{equation}
where $\omega_{0}(x)=\int^x_{\beta} \frac{1}{\sqrt{f(r)}}dr$.

\textbf{Case 3. Reduced equations of (\ref{1}) with any $f(x)$ and $g(x)=\frac{f(x)'}{2}$} 
Let us consider the following telegraph equation
\begin{equation}\label{usys1}
    \begin{cases}
    \frac{\partial^\alpha u}{\partial t^\alpha}=v_x,\\
\frac{\partial^\alpha v}{\partial t^\alpha}=f(x) u_x+
\frac{f(x)'}{2}u,
    \end{cases}
\end{equation}
where $f(x)$ is a sufficiently differentiable function. To simplify the calculations, we chose the basis 
\begin{eqnarray}
    \nonumber
    V_1=&-X_4&=-\sqrt{f(x)}\omega_{0}(x) \frac{\partial}{\partial x}-\frac{t}{\alpha}\partial_t+\frac{f(x)'}{2\sqrt{f(x)}}\omega_{0}(x)u 
    \frac{\partial}{\partial u},\\
    \nonumber
    V_2=&-X_5&=-\sqrt{f(x)}\frac{\partial}{\partial x}+\frac{f(x)'}{2\sqrt{f(x)}}u\frac{\partial}{\partial u},\\
    \nonumber
     V_3=&X_1&=u\frac{\partial}{\partial u}+v\frac{\partial}{\partial v},\\
    \nonumber
    V_4=&X_6&=\frac{1}{\sqrt{f(x)}}v\frac{\partial}{\partial u}+\sqrt{f(x)}u\frac{\partial}{\partial v},
\end{eqnarray}
where $\omega_{0}(x)=\int^x_{\beta} \frac{1}{\sqrt{f(r)}}dr$.
The commutator table for the Lie algebra is given in Table \ref{Table1} (where indices $i$ and $j$ are for the row and column, respectively).
\begin{table}[!ht]
\centering
\begin{tabular}{|c|c|c|c|c|}
 \hline
 $[V_i, V_j]$ & $V_1$ & $V_2$ & $V_3$ & $V_4$\\
 \hline
 $V_1$ & 0 & $V_2$ & 0 & 0 \\
 \hline
 $V_2$ &$-V_2$ & 0 & 0 & 0 \\
 \hline
 $V_3$ & 0 & 0 & 0 & 0 \\
 \hline
 $V_4$ & 0 & 0 & 0 & 0 \\ \hline
\end{tabular}
\caption{Commutator table for $V_1,$ $V_2,$ $V_3$ and $V_4$.}
\label{Table1}
\end{table}

From Table \ref{Table1} that the Lie algebra is identical to the Lie algebra $A_2\oplus 2A_1$ given in \cite{PateraWinternitz1977}. 
Thus, the one-dimensional optimal system of the Lie algebra generated by $V_1 , V_2, V_3$ and $V_4$ is that obtained in 
\cite{PateraWinternitz1977}
\begin{eqnarray*}
\label{ivU1}
W_1 & = & V_3=X_1,\\
W_4 & = & V_1- a_1 V_3-a_2 V_4=-a_1X_1-X_4-a_2X_6, \text{ where } a_1,a_2\in\mathbb{R},\\
\label{ivU2}
W_5 & = & V_2-a_1 V_3-a_2 V_4=-a_1X_1-X_5-a_2X_6,\\
&&\text{ where }  (a_1, a_2)\in\{(\pm 1,a), (0,\pm 1), (0,0)|a\in\mathbb{R}\},\\
\label{ivU3}
W_6 & = & a_1 V_3+ V_4=a_1X_1+X_6, \text{ where } a_1\in\mathbb{R}.\\
\label{ivU4}
\end{eqnarray*}
For $W_4$, the \textit{invariance surface condition} reads as
\begin{eqnarray}
\begin{cases}
\sqrt{f(x)}\omega_{0}(x)u_x+\frac{t}{\alpha}u_t=
\left(a_1-\frac{f(x)'}{2\sqrt{f(x)}}\omega_{0}(x) \right) u+\frac{a_2}{\sqrt{f(x)}}v\\
\sqrt{f(x)}\omega_{0}(x) v_x+\frac{t}{\alpha}v_t=
a_1v+a_2\sqrt{f(x)}u,
\end{cases}
\end{eqnarray}
which gives the similarity variable
$z=t\omega_{0}(x)^{-\frac{1}{\alpha}}
$. Consequently, the similarity transformation is obtained as
\begin{eqnarray}
\label{eq:ST_thm_iv_U1}
\begin{cases}
u(x,t)=\frac{1}{\sqrt{f(x)}}\omega_{0}(x)^{a_1+a_2}\varphi(z)+\frac{1}{\sqrt{f(x)}}\omega_{0}(x)^{a_1-a_2}\psi(z)\\
v(x,t)=\omega_{0}(x)^{a_1+a_2}\varphi(z)-\omega_{0}(x)^{a_1-a_2}\psi(z)
\end{cases}
\end{eqnarray}
Substituting (\ref{eq:ST_thm_iv_U1}) into (\ref{usys1}), we get the reduced system
\begin{eqnarray}
\label{eq:BU1ODE}
\begin{cases}
\frac{d^{\alpha}}{dz^{\alpha}}\varphi(z)=(a_2+a_1)\varphi(z) -\frac{1}{\alpha}z\varphi(z)'\\
\frac{d^{\alpha}}{dz^{\alpha}}\psi(z)=(a_2-a_1)\psi(z) +\frac{1}{\alpha}z\psi(z)'
\end{cases}
\end{eqnarray}
Analogously to the case $W_4$, we have the following invariant solutions for $W_5$
\begin{align}\label{eq:ST_thm_iv_U2}
\begin{cases}
u(x,t)=&\frac{1}{\sqrt{f(x)}}\exp \left((a_1+a_2)\omega_{0}(x)\right)\varphi(t)\\
&+\frac{1}{\sqrt{f(x)}}\exp \left((a_1-a_2)\omega_{0}(x)\right)\psi(t)
\\
v(x,t)=&\exp \left((a_1+a_2)\omega_{0}(x)\right)\varphi(t)-\exp \left((a_1-a_2)\omega_{0}(x)\right)\psi(t).
\end{cases}
\end{align}
and the reduced system
\begin{eqnarray}
\label{eq:BU2ODE}
\begin{cases}
\frac{d^{\alpha}}{dt^{\alpha}}\varphi(t)=(a_2+a_1)\varphi(t), \\
\frac{d^{\alpha}}{dt^{\alpha}}\psi(t)=(a_2-a_1)\psi(t).
\end{cases}
\end{eqnarray}
There are no invariant solutions corresponding to $W_6$.

\section{Group invariant solutions of the system given in (\ref{1})}

In this section, we provide solutions to the reduced systems obtained in the previous section, and using these solutions, we express invariant solutions of (\ref{1}) explicitly.
The reduced systems (\ref{eq:RS_thm_ii}) and (\ref{eq:RS_thm_iii})  have the following general form:
\begin{equation}\label{e15}
\begin{cases}
\displaystyle{\frac{d^\alpha}{d z^\alpha}}\varphi(z) = a_1\psi(z)+\frac{b_1}{\alpha}z \psi(z)',\\
\displaystyle{\frac{d^\alpha}{d z^\alpha}}\psi(z) = a_2\varphi(z)+\frac{b_2}{\alpha} z\varphi(z)',
\end{cases}
\end{equation}
where  $a_1,$ $a_2,$ $b_1$ and $b_2$ are constants. When $g(x)=\frac{f(x)'}{2}$, the reduced equations (\ref{eq:BU1ODE}) and (\ref{eq:BU2ODE}) have the following general form:
\begin{equation}\label{e16}
\frac{d^\alpha}{d z^\alpha}\varphi(z) = a\varphi(z)+\frac{b}{\alpha}z \varphi(z)',\quad
\end{equation}
where $a$ and $b$ are constants.
Thus, the problem of finding invariant solutions of (\ref{1}) is reduced to the problem of finding the solutions of (\ref{e15}) and (\ref{e16}). It should be noted that the solutions expressed in terms of the Mittag--Leffler function and the generalized Wright function are defined for $z\in\mathbb{R},$ whereas the solutions expressed in terms of the Fox H-function are defined for $z>0$.

We recall the following special functions for introducing exact invariant solutions of Eq.~(\ref{1}). The Fox H-function (e.g. in \cite{glockle,Kiryakova,Hnom}) is defined by means of the Mellin-Barnes type contour integral
\begin{equation}
\label{int1}
H_{p,q}^{m,l}\left[z\biggr\vert\begin{array}{c}
(a_i, \alpha_i)_{1,p}\\
(b_j, \beta_j)_{1,q}
\end{array}\right]=\frac{1}{2\pi i}\int_{L}\frac{\prod\limits_{j=1}^m\Gamma(b_j-\beta_js)\prod\limits_{i=1}^l\Gamma(1-a_i+\alpha_is)}{\prod\limits_{i=l+1}^p\Gamma(a_i-\alpha_is)\prod\limits_{j=m+1}^q\Gamma(1-b_j+\beta_js)}z^{s}d s,
\end{equation}
for $z\in\mathbb{C}\setminus\{0\},$ where $m,l,p,q\in\mathbb{N}_0=\{0,1,2,\ldots\}$, $(m,l)\neq (0,0),$ $\alpha_i, \beta_j\in\mathbb{R}_+$, $a_i,b_j\in\mathbb{R}$ ($i=1,\ldots, p;j=1,\ldots,q$). Here, $L$ is a suitable contour from $\gamma-i\infty$ to $\gamma+i\infty$, where $\gamma$ is a real number. The integral in (\ref{int1}) converges if the following conditions are met 
\begin{equation*}
\rho=\sum_{i=1}^{l}\alpha_i-\sum_{i=l+1}^{p}\alpha_i+\sum_{j=1}^{m}\beta_j-\sum_{j=m+1}^{q}\beta_j>0\quad\mbox{and}\quad |\arg z|<\frac{\pi\rho}{2}. 
\end{equation*}
The Fox H-function vanishes for large $z$ because 
\begin{equation*}
\label{eq8}
H_{p,q}^{m,0}[z]\approx O\left(\exp\left(-\nu z^{\frac{1}{\nu}}\mu^{\frac{1}{\nu}}\right)z^{\frac{2\delta+1}{2\nu}}\right),
\end{equation*}
where $\mu=\prod\limits_{i=1}^p\alpha_i^{\alpha_i}\prod\limits_{j=1}^q\beta_j^{-\beta_j},$ $\delta=\sum\limits_{j=1}^q b_j-\sum\limits_{i=1}^p a_i+\frac{p-q}{2}$  and
$\nu=\sum\limits_{j=1}^{q}\beta_j-\sum\limits_{i=1}^{p}\alpha_i>0$.

The generalized Wright function (e.g. in \cite{Luchko1,kilb2005}) is defined as
\begin{eqnarray}
\label{ub3}
{}_p\Psi_q\left[z\left|\begin{array}{c}
(a_i,\alpha_i)_{1,p}\\
(b_j,\beta_j)_{1,q}
\end{array}\right.\right] & = & \sum_{k=0}^\infty\frac{\prod\limits_{i=1}^p\Gamma(a_i+\alpha_i k)}{\prod\limits_{j=1}^q\Gamma(b_j+\beta_j k)}\frac{z^k}{k!},
\end{eqnarray}
for $z\in\mathbb{C},$ $p,q\in\mathbb{N}_0,$ $a_i,b_j\in\mathbb{C}$ and $\alpha_i, \beta_j\in\mathbb{R}\setminus\{0\}$ ($i=1,\ldots,p; j=1,\ldots,q$). 

If $\Delta=\sum\limits_{j=1}^q \beta_j-\sum\limits_{i=1}^p \alpha_i>-1$ or $\Delta=-1,$ then the series in (\ref{ub3}) is absolutely convergent for $z\in \mathbb{C}$ or $|z|<\prod\limits_{i=1}^p|\alpha_i|^{-\alpha_i}\prod\limits_{j=1}^q|\beta_j|^{\beta_j}$, respectively. 

Lastly, the Mittag-Leffler and Wright functions can be expressed in terms of the generalized Wright functions, respectively, as
\begin{equation*}
E_{\alpha,\beta}(z) = {}_1\Psi_1\left[z\left|\begin{array}{c}
(1,1)\\
(\beta,\alpha)
\end{array}\right.
\right]\quad\mbox{and}\quad 
\Psi\left(z;\alpha,\beta\right) ={}_0\Psi_{1}\left[z\left|\begin{array}{c}
-\\
(\beta,\alpha)
\end{array}\right.\right]. 
\end{equation*}

The solutions of (\ref{e15}) and (\ref{e16}) have been studied in detail in 
\cite{ourFCAA,ourJMP}. In particular, we obtain the following results based on Propositions 3.1 and 3.2 of \cite{ourFCAA} and Lemmas 2.1 and 2.2 of \cite{ourJMP}.

\textbf{Case 1. Solutions of (\ref{1}) with any $f(x)$ and $g(x)=\frac{\lambda_2\sqrt{f(x)}}{\omega_{\lambda_1}(x)}+\frac{f(x)'}{2}$ (here $\omega_{\lambda_1}(x)=\int_\beta^x \frac{dr}{\sqrt{f(r)}} +\lambda_1$, $\lambda_1,\lambda_2\in \mathbb{R},~ \lambda_2\ne 0$)}

In this case, invariant solutions of (\ref{eq:TEcaseA2}) are obtained as follows:
\begin{itemize}
\item[1.] For $0 <\alpha<1$, if we apply the third assertion of Proposition~3.2 of
\cite{ourFCAA} 
with $a_1 = a$, $a_2 = a+\lambda_2$ and
$b_1 = b_2 =-1$ in (\ref{eq:RS_thm_ii}), we obtain the following solutions through (\ref{eq:ST_thm_ii}):
\begin{eqnarray}\label{sol1}
u(x,t)&=&\frac{c}{\sqrt{f(x)}}\omega_{\lambda_1}(x)^{a}
\varphi\left(\omega_{\lambda_1}(x)^{-\frac{1}{\alpha}} t\right),\nonumber\\
v(x,t)&=& -c\omega_{\lambda_1}(x)^a
\psi\left(\omega_{\lambda_1}(x)^{-\frac{1}{\alpha}} t\right),
\end{eqnarray}
where $c$ is a constant and
\begin{eqnarray*}
\varphi(z) & = &  H_{1,2}^{2,0}\left[\frac{z^{-2\alpha}}{4}\biggr\vert\begin{array}{c}
(1,2\alpha)\\
\left(\frac{1}{2}-\frac{a}{2},1\right), \left(-\frac{a+\lambda_2}{2},1\right)
\end{array}\right],\\
\psi(z) & = &  H_{1,2}^{2,0}\left[\frac{z^{-2\alpha}}{4}
\biggr\vert\begin{array}{c}
(1,2\alpha)\\
\left(-\frac{a}{2},1\right), \left(\frac{1}{2}-\frac{a+\lambda_2}{2},1\right)
\end{array}\right].
\end{eqnarray*}
\item[2.] For $\alpha\ge 1,$ applying the third assertion of Proposition~3.1 of \cite{ourFCAA} 
with the same parameters as in item 1 in (\ref{eq:RS_thm_ii}), we obtain the following solutions through (\ref{eq:ST_thm_ii}):
\begin{eqnarray}
u(x,t)&=&\frac{1}{\sqrt{f(x)}}\omega_{\lambda_1}(x)^{a}
\varphi\left(\omega_{\lambda_1}(x)^{-\frac{1}{\alpha}} t\right),\\
v(x,t)&=& \omega_{\lambda_1}(x)^a
\psi\left(\omega_{\lambda_1}(x)^{-\frac{1}{\alpha}} t\right),
\end{eqnarray}
where  
\begin{eqnarray*}
\varphi(z)&=&\sum_{k=1}^{n}c_{k,1}z^{\alpha-k} {}_3\Psi_1\left[4 z^{2\alpha}\biggr\vert
\begin{array}{c}
\left(1-\frac
{a}{2}-\frac{k}{2\alpha},1\right),\left(\frac{1}{2}-\frac{a+\lambda_2}{2}-\frac{k}{2\alpha},1\right),(1,1)\\
(1+\alpha-k,2\alpha)
\end{array}\right]\\
& & -2\sum_{k=1}^n c_{k,2}z^{2\alpha-k} {}_3\Psi_1\left[4 z^{2\alpha}\biggr\vert\begin{array}{c}\left(\frac{3}{2}-\frac
{a}{2}-\frac{k}{2\alpha},1\right),\left(1-\frac{a+\lambda_2}{2}-\frac{k}{2\alpha},1\right),(1,1)\\
(1+2\alpha-k,2\alpha)
\end{array}\right],\\
\psi(z)&=&-2\sum_{k=1}^n c_{k,1}z^{2\alpha-k} {}_3\Psi_1\left[4 z^{2\alpha}\biggr\vert\begin{array}{c}
\left(1-\frac
{a}{2}-\frac{k}{2\alpha},1\right),\left(\frac{3}{2}-\frac{a+\lambda_2}{2}-\frac{k}{2\alpha},1\right),(1,1)\\
(1+2\alpha-k,2\alpha)
\end{array}\right]\\
& & +\sum_{k=1}^n c_{k,2}z^{\alpha-k} {}_3\Psi_1\left[4 z^{2\alpha}\biggr\vert\begin{array}{c}
\left(\frac{1}{2}-\frac
{a}{2}-\frac{k}{2\alpha},1\right),\left(1-\frac{a+\lambda_2}{2}-\frac{k}{2\alpha},1\right),(1,1)\\
(1+\alpha-k,2\alpha)
\end{array}\right],
\end{eqnarray*}
where $n\in\mathbb{N}$ satisfies $0\leq n-1<\alpha\le n$  and $c_{k,1},$ $c_{k,2}$ $(k=1,\dots,n)$ are constants.
\end{itemize}

\textbf{Case 2. Solutions of (\ref{1}) with any $f(x)$ and $g(x)=\lambda_2 \sqrt{f(x)}+\frac{f(x)'}{2}$ (here $\lambda_2\in \mathbb{R},~ \lambda_2\ne 0$)}

In this case, using the first assertion of Lemma~3.1 of \cite{ourJMP} 
with $a_1 = a$ and $a_2 = a + \lambda_2$ in (\ref{eq:RS_thm_iii}), we
obtain the following solutions  of (\ref{eq:TEcaseA3}) through (\ref{eq:ST_thm_iii}):
\begin{eqnarray}\label{sol2}
u(x,t)&=&\frac{1}{\sqrt{f(x)}}\exp\left(a\omega_0(x)\right)\varphi(t),\nonumber\\
v(x,t)&=&\exp\left(a\omega_0(x)\right)\psi(t),
\end{eqnarray}
where 
\begin{align*}
\omega_{0}(x)&=\int^x_{\beta} \frac{1}{\sqrt{f(r)}}dr,\\
\varphi(t) & =  \sum_{k=1}^n c_{k,1}t^{\alpha-k} E_{
2\alpha,1+\alpha-k}\left(a(a+\lambda_2)t^{2\alpha}\right)
+a\sum_{k=1}^nc_{k,2}t^{2\alpha-k} E_{2\alpha,1+2\alpha-k}\left(a(a+\lambda_2)t^{2\alpha}\right),\\
\psi(t) & =  (a+\lambda_2)\sum_{k=1}^n c_{k,1}t^{2\alpha-k} E_{
2\alpha,1+2\alpha-k}\left(a(a+\lambda_2)t^{2\alpha}\right)+\sum_{k=1}^nc_{k,2}t^{\alpha-k} E_{
2\alpha,1+\alpha-k}
\left(a(a+\lambda_2)t^{2\alpha}\right),
\end{align*}
where $n\in\mathbb{N}$ satisfies $0\leq n-1<\alpha\le n,$  and $c_{k,1},$ $c_{k,2}$ $(k=1,\dots,n)$ are constants.

This is a compelling instance because (\ref{1}) can be reduced to separable equations for any function $f(x)$. When $g(x)=0$, however, only $f(x)=x^2$ leads to a separable equation 
\cite{ourJMP}. 
Nevertheless, if $f(x)=x^{2m}, ~\forall m>0$ and $g(x)=x^m(\lambda_2+mx^{m-1}),$ then the system (\ref{1}) is reduced into separable form.

\textbf{Case 3. Solutions of (\ref{1}) with any $f(x)$ and $g(x)=\frac{f(x)'}{2}$.}
Unlike the previously cases of $f(x)$ and $g(x)$, the invariant solutions of (\ref{1}) with any $f(x)$ and $g(x)=\frac{f(x)'}{2}$ are expressed
as the sum of the solutions of the two individual FODEs. First, invariant solutions of (\ref{usys1})
corresponding to $W_4$ are given as follows:
\begin{enumerate}
\item If $0<\alpha<1$, using the second assertion of Lemma~3.2 of \cite{ourJMP} 
with $a=a_2-a_1$ and $b=1$ in the second equation of the reduced system in (\ref{eq:BU1ODE}), we obtain the following invariant solutions corresponding to $W_4$
\begin{eqnarray}\label{sol3}
u(x,t)&=&\frac{ct^{(a_1-a_2)\alpha}}{\sqrt{f(x)}}\Psi\left(-\frac{\omega_0(x)}{t^{\alpha}};-\alpha,1+(a_1-a_2)\alpha\right)\nonumber,\\
v(x,t)&=&-ct^{(a_1-a_2)\alpha}\Psi\left(-\frac{\omega_0(x)}{t^{\alpha}};-\alpha,1+(a_1-a_2)\alpha\right).
\end{eqnarray}

\item If $\alpha \ge 1$, using the third assertion of Lemma~3.2 
\cite{ourFCAA}  
with $a = a_1 + a_2$, $b = -1$ for the first equation
and $a = a_2 - a_1$, $b = 1$ for the second equation of the reduced system in (\ref{eq:BU1ODE}), we obtain the following invariant solutions corresponding to $W_4$
\begin{eqnarray}\label{c3}
u(x,t) & = & \frac{1}{\sqrt{f(x)}}\omega_0(x)^{a_1+a_2-1}\sum\limits_{k=1}^nc_{k,1}\omega_0(x)^{\frac{k}{\alpha}}t^{\alpha-k}\varphi_k\left(\omega_0(x)^{-\frac{1}{\alpha}}t\right)\\
&&+\frac{1}{\sqrt{f(x)}}\omega_0(x)^{a_1-a_2-1}\sum\limits_{k=1}^nc_{k,2}\omega_0(x)^{\frac{k}{\alpha}}t^{\alpha-k}\psi_k\left(\omega_0(x)^{-\frac{1}{\alpha}}t\right),\nonumber\\
v(x,t) & = & \omega_0(x)^{a_1+a_2-1}\sum\limits_{k=1}^nc_{k,1}\omega_0(x)^{\frac{k}{\alpha}}t^{\alpha-k}\varphi_k\left(\omega_0(x)^{-\frac{1}{\alpha}}t\right)\\
&&-\omega_0(x)^{a_1-a_2-1}\sum\limits_{k=1}^nc_{k,2}\omega_0(x)^{\frac{k}{\alpha}}t^{\alpha-k}\psi_k\left(\omega_0(x)^{-\frac{1}{\alpha}}t\right),
\end{eqnarray}
where $n\in\mathbb{N}$ satisfies $0\leq n-1<\alpha\le n,$  and $c_{k,1},$ $c_{k,2}$ $(k=1,\dots,n)$ are constants,
\begin{eqnarray*}
\omega_{0}(x)&=&\int^x_{\beta} \frac{1}{\sqrt{f(r)}}dr,\\
\varphi_k(z) &=& {}_2\Psi_1\left[-z^{\alpha}\biggr\vert\begin{array}{c}
\left(-a_1-a_2-\frac{k}{\alpha}+1,1\right),(1,1)\\
(1+\alpha-k,\alpha)
\end{array}\right],\\
\psi_k(z) &=&  {}_2\Psi_1\left[z^{\alpha}\biggr\vert\begin{array}{c}
\left(-a_1+a_2-\frac{k}{\alpha}+1,1\right),(1,1)
\\
(1+\alpha-k,\alpha)
\end{array}\right].
\end{eqnarray*}
\end{enumerate}

Since $b=0$ in both equations of the reduced system in (\ref{eq:BU2ODE}), we obtain the following invariant solutions corresponding to $W_5$ using the first assertion of Lemma~3.2 of \cite{ourJMP}:
\begin{eqnarray*}\label{sol4}
u(x,t) & = & \frac{1}{\sqrt{f(x)}}\exp{\left((a_1+a_2)\omega_0(x)\right)}\sum\limits_{k=1}^nc_{k,1}t^{\alpha-k}\varphi_k(t)\\
&&~~+\frac{1}{\sqrt{f(x)}}\exp{\left((a_1-a_2)\omega_0(x)\right)}\sum\limits_{k=1}^nc_{k,2}t^{\alpha-k}\psi_k(t),\nonumber\\
v(x,t) & = & \exp{\left((a_1+a_2)\omega_0(x)\right)}\sum\limits_{k=1}^nc_{k,1}t^{\alpha-k}\varphi_k(t)-\exp{\left((a_1-a_2)\omega_0(x)\right)}\sum\limits_{k=1}^nc_{k,2}t^{\alpha-k}\psi_k(t),
\end{eqnarray*}
where $\omega_{0}(x)=\int^x_{\beta} \frac{1}{\sqrt{f(r)}}dr$, $\varphi_k(t)=E_{\alpha,1+\alpha-k}((a_1+a_2)t^{\alpha}), \quad  \psi_k(t) = E_{\alpha,1+\alpha-k}((a_2-a_1)t^{\alpha})$
and $(a_1,a_2)\in\{(\pm 1,a), (0,\pm 1), (0,0)\vert a\in\mathbb{R}\}$.
Because this case admits the same symmetries as the previous two cases, it can be reduced to both separable and inseparable forms. 

\section{Conclusion}

In this study, we established a complete Lie symmetry classification for a class of time-fractional telegraph systems with spatially varying coefficients. These systems represent fractional generalizations of classical telegraph equations and provide important mathematical models for transport phenomena involving memory and nonlocal effects. Such fractional telegraph models naturally arise in industrial and physical applications, including heat transport in materials with thermal memory, wave propagation in viscoelastic media, and charge transport in spatially heterogeneous semiconductor devices.

Our analysis shows that the symmetry structure of the system depends fundamentally on the relationship between the transport coefficient $f(x)$ and the potential function $g(x)$. We determined all functional forms of the coefficient functions that admit symmetry extensions and classified the corresponding Lie symmetry algebras into three distinct symmetry classes. This result provides a complete group classification for the class of time fractional telegraph systems under consideration and significantly extends existing symmetry classification results for fractional evolution and diffusion equations, which were previously limited to specific coefficient forms or special cases.

Using the optimal system methodology, we systematically reduced the governing fractional telegraph equations to fractional ordinary differential equations. This reduction enabled the explicit construction of invariant solutions in closed analytical form. The obtained solutions are expressed in terms of Mittag–Leffler functions, generalized Wright functions, and Fox $H$-functions, which naturally characterize fractional transport processes with memory. These analytical solutions provide valuable insights into the mathematical structure and physical behavior of fractional telegraph systems and serve as important benchmarks for validating the numerical methods used in fractional transport modeling.

The results obtained in this study extend the classical symmetry analysis of telegraph equations to the fractional setting with general spatially varying coefficients. In particular, the complete symmetry classification presented here broadens the class of fractional telegraph systems for which invariant solutions can be systematically constructed and analyzed. This contributes to the mathematical foundation of fractional transport theory and enhances the applicability of symmetry methods in modeling transport phenomena in complex industrial systems.

To the best of our knowledge, this is the first work that provides a complete Lie symmetry classification and explicit invariant solutions for time-fractional telegraph systems with general spatially varying coefficients.
Future work may focus on extending the present analysis to nonlinear fractional telegraph equations, multidimensional fractional transport systems, and models involving alternative fractional operators such as Caputo and generalized fractional derivatives. These extensions would further strengthen the role of symmetry analysis in the study of fractional transport processes and support the development of accurate mathematical models for industrial systems with memory and spatial heterogeneity.

\subsection*{Acknowledgments}
 This work was supported by the National University of Mongolia (Grant No.P2020-3966)

\bibliographystyle{unsrt} 
\bibliography{ref}

@book{BlumanKumei1989,
  author    = {Bluman, George W. and Kumei, Sukeyuki},
  title     = {Symmetries and Differential Equations},
  series    = {Applied Mathematical Sciences},
  volume    = {81},
  publisher = {Springer-Verlag},
  address   = {New York},
  year      = {1989},
  doi       = {10.1007/978-1-4612-4478-3}
}

@article{HuangShen2015,
  author  = {Huang, Qing and Shen, Shoufeng},
  title   = {Lie symmetries and group classification of a class of time-fractional evolution systems},
  journal = {Journal of Mathematical Physics},
  volume  = {56},
  number  = {12},
  pages   = {123504},
  year    = {2015},
  doi     = {10.1063/1.4937399}
}

@article{ourJMP,
  author  = {Dorjgotov, Khongorzul and Ochiai, Hiroyuki and Zunderiya, Uuganbayar},
  title   = {Exact solutions to a class of time-fractional evolution systems with variable coefficients},
  journal = {Journal of Mathematical Physics},
  volume  = {59},
  number  = {8},
  pages   = {081504},
  year    = {2018},
  doi     = {10.1063/1.5026462}
}

@article{ourNL,
  author  = {Dorjgotov, Khongorzul and Ochiai, Hiroyuki and Zunderiya, Uuganbayar},
  title   = {Lie symmetry analysis of a class of time-fractional nonlinear evolution systems},
  journal = {Applied Mathematics and Computation},
  volume  = {329},
  pages   = {105--117},
  year    = {2018},
  doi     = {10.1016/j.amc.2018.02.028}
}

@book{Podlubny1999,
  author    = {Podlubny, Igor},
  title     = {Fractional Differential Equations},
  series    = {Mathematics in Science and Engineering},
  volume    = {198},
  publisher = {Academic Press},
  address   = {San Diego},
  year      = {1999}
}

@article{ourFCAA,
  author  = {Dorjgotov, Khongorzul and Ochiai, Hiroyuki and Zunderiya, Uuganbayar},
  title   = {On solutions of linear fractional differential equations and systems thereof},
  journal = {Fractional Calculus and Applied Analysis},
  volume  = {22},
  number  = {2},
  pages   = {479--494},
  year    = {2019},
  doi     = {10.1515/fca-2019-0028}
}

@book{Olver1993,
  author    = {Olver, Peter J.},
  title     = {Applications of Lie Groups to Differential Equations},
  edition   = {2nd},
  series    = {Graduate Texts in Mathematics},
  volume    = {107},
  publisher = {Springer-Verlag},
  address   = {New York},
  year      = {1993},
  doi       = {10.1007/978-1-4684-0533-0}
}

@article{GazizovKasatkinLukashchuk2007,
  author  = {Gazizov, Rafail K. and Kasatkin, Alexander A. and Lukashchuk, Stanislav Yu.},
  title   = {Continuous transformation groups of fractional differential equations},
  journal = {Symmetry, Integrability and Geometry: Methods and Applications (SIGMA)},
  volume  = {3},
  pages   = {002},
  year    = {2007},
  doi     = {10.3842/SIGMA.2007.002}
}

@book{MillerRoss1993,
  author    = {Miller, Kenneth S. and Ross, Bertram},
  title     = {An Introduction to the Fractional Calculus and Fractional Differential Equations},
  publisher = {John Wiley \& Sons},
  address   = {New York},
  year      = {1993}
}

@article{Gurefe2023,
  author  = {Gurefe, Yusuf},
  title   = {The fractional derivative with respect to another function and its application to Lie symmetry analysis},
  journal = {Chaos, Solitons \& Fractals},
  volume  = {170},
  pages   = {113335},
  year    = {2023},
  doi     = {10.1016/j.chaos.2023.113335}
}

@article{PateraWinternitz1977,
  author  = {Patera, J. and Winternitz, P.},
  title   = {Subalgebras of real three- and four-dimensional Lie algebras},
  journal = {Journal of Mathematical Physics},
  volume  = {18},
  number  = {7},
  pages   = {1449--1455},
  year    = {1977},
  doi     = {10.1063/1.523453}
}

@article{Baleanu2022,
  author  = {Baleanu, Dumitru and Sajjad, Shama and Jamshed, Waqas and Safdar, M.},
  title   = {Lie Symmetry Analysis of Fractional Differential Equations},
  journal = {Fractals},
  volume  = {30},
  number  = {04},
  pages   = {2250075},
  year    = {2022},
  doi     = {10.1142/S0218348X2250075X}
}

@article{Wang2023,
  author  = {Wang, Ying-Hui and Deng, Hou-Dao and Cheng, Wei-Guo},
  title   = {Lie symmetry analysis, optimal system and exact solutions of the (2+1)-dimensional time-fractional Kadomtsev-Petviashvili equation},
  journal = {Fractional Calculus and Applied Analysis},
  volume  = {26},
  number  = {1},
  pages   = {240--264},
  year    = {2023},
  doi     = {10.1007/s13540-022-00108-w}
}

@book{Mainardi2010,
  author    = {Mainardi, Francesco},
  title     = {Fractional Calculus and Waves in Linear Viscoelasticity: An Introduction to Mathematical Models},
  publisher = {Imperial College Press},
  address   = {London},
  year      = {2010},
  doi       = {10.1142/p614}
}

@article{OrsingherBeghin2004,
  author  = {Orsingher, Enzo and Beghin, Luisa},
  title   = {Time-fractional telegraph equations and telegraph processes with Brownian time},
  journal = {Probability Theory and Related Fields},
  volume  = {128},
  number  = {1},
  pages   = {141--160},
  year    = {2004},
  doi     = {10.1007/s00440-003-0308-4}
}

@article{MetzlerKlafter2000,
  author  = {Metzler, Ralf and Klafter, Joseph},
  title   = {The random walk's guide to anomalous diffusion: a fractional dynamics approach},
  journal = {Physics Reports},
  volume  = {339},
  number  = {1},
  pages   = {1--77},
  year    = {2000},
  doi     = {10.1016/S0370-1573(00)00070-3}
}

@article{HenryWearne2000,
  author  = {Henry, Bruce I. and Wearne, Stephen L.},
  title   = {Fractional reaction--diffusion},
  journal = {Physica A: Statistical Mechanics and its Applications},
  volume  = {276},
  number  = {3--4},
  pages   = {448--455},
  year    = {2000},
  doi     = {10.1016/S0378-4371(99)00469-9}
}

@article{ScherMontroll1975,
  author  = {Scher, Harvey and Montroll, Elliott W.},
  title   = {Anomalous transit-time dispersion in amorphous solids},
  journal = {Physical Review B},
  volume  = {12},
  number  = {6},
  pages   = {2455--2477},
  year    = {1975},
  doi     = {10.1103/PhysRevB.12.2455}
}

@article{Kac1974,
  author  = {Kac, Mark},
  title   = {A stochastic model related to the telegrapher's equation},
  journal = {Rocky Mountain Journal of Mathematics},
  volume  = {4},
  number  = {3},
  pages   = {497--509},
  year    = {1974}
}

@article{BlumanTemuerchaoluSahadevan2005,
  author  = {Bluman, George and Temuerchaolu and Sahadevan, R.},
  title   = {Local and nonlocal symmetries for nonlinear telegraph equations},
  journal = {Journal of Mathematical Physics},
  volume  = {46},
  number  = {2},
  pages   = {023505},
  year    = {2005},
  doi     = {10.1063/1.1835507}
}

@article{BlumanTemuerchaolu2005JMAA,
  author  = {Bluman, George and Temuerchaolu},
  title   = {Conservation laws for nonlinear telegraph equations},
  journal = {Journal of Mathematical Analysis and Applications},
  volume  = {310},
  number  = {2},
  pages   = {459--476},
  year    = {2005},
  doi     = {10.1016/j.jmaa.2005.01.012}
}

@article{BlumanTemuerchaolu2005JMP,
  author  = {Bluman, George and Temuerchaolu},
  title   = {Comparing symmetries and conservation laws of nonlinear telegraph equations},
  journal = {Journal of Mathematical Physics},
  volume  = {46},
  number  = {7},
  pages   = {073513},
  year    = {2005},
  doi     = {10.1063/1.1946907}
}

@article{LukashchukMakunin2015,
  author  = {Lukashchuk, Stanislav Yu. and Makunin, A. V.},
  title   = {Group classification of nonlinear time-fractional diffusion equation with a source term},
  journal = {Applied Mathematics and Computation},
  volume  = {257},
  pages   = {335--343},
  year    = {2015},
  doi     = {10.1016/j.amc.2014.11.087}
}

@article{glockle,
  author       = {Gl{\"o}ckle, Walter G. and Nonnenmacher, Theo F.},
  title        = {Fox function representation of non-Debye relaxation processes},
  journal      = {Journal of Statistical Physics},
  volume       = {71},
  number       = {3-4},
  pages        = {741--757},
  year         = {1993},
  doi          = {10.1007/BF01058445}
}

@book{Hnom,
  author    = {Mathai, A. M. and Saxena, Ram Kishore and Haubold, Hans J.},
  title     = {The H-Function: Theory and Applications},
  publisher = {Springer-Verlag},
  address   = {New York},
  year      = {2010},
  doi       = {10.1007/978-1-4419-3002-6}
}

@article{Kiryakova,
  author  = {Kiryakova, Virginia},
  title   = {Fractional calculus operators of special functions? The result is well predictable!},
  journal = {Chaos, Solitons \& Fractals},
  volume  = {102},
  pages   = {2--15},
  year    = {2017},
  doi     = {10.1016/j.chaos.2017.05.006}
}

@article{Luchko1,
  author    = {Gorenflo, Rudolf and Luchko, Yuri and Mainardi, Francesco},
  title     = {Wright functions as scale-invariant solutions of the diffusion-wave equation},
  journal   = {Journal of Computational and Applied Mathematics},
  volume    = {118},
  number    = {1-2},
  year      = {2000},
  pages     = {175--191},
  doi       = {10.1016/S0377-0427(00)00288-0}
}

@article{kilb2005,
  author  = {Kilbas, Anatoly A.},
  title   = {Fractional calculus of the generalized Wright function},
  journal = {Fractional Calculus and Applied Analysis},
  volume  = {8},
  number  = {2},
  pages   = {113--126},
  year    = {2005}
}
\end{document}